\documentclass[10pt, hidelinks]{article}
\usepackage{stix}
\usepackage[left=20mm,right=20mm,top=20mm,bottom=25mm,a4paper]{geometry}
\usepackage[utf8]{inputenc}
\usepackage{amsmath}
\DeclareMathOperator{\diag}{diag}

\newcommand{\bC}{\mathbf{C}}
\newcommand{\bF}{\mathbf{F}}
\newcommand{\bI}{\mathbf{I}}
\newcommand{\bH}{\mathbf{H}}
\newcommand{\bnull}{\mathbf{0}}
\newcommand{\SP}{\mathcal{SP}}
\newcommand{\SM}{\mathcal{SM}}
\newcommand{\bsigma}{\boldsymbol{\upsigma}}
\newcommand{\btau}{\boldsymbol{\uptau}}
\newcommand{\ba}{\boldsymbol{a}}
\newcommand{\bb}{\boldsymbol{b}}
\newcommand{\be}{\boldsymbol{e}}

\newcommand{\SL}{\operatorname{SL}}
\newcommand{\SYM}{\text{Sym}}

\newcommand{\tr}{\operatorname{tr}}
\newcommand{\cof}{\operatorname{Cof}}
\newcommand{\bbR}{{\mathbb{R}}}
\newcommand{\cC}{{\mathcal{C}}}
\newcommand{\cP}{{\mathcal{P}}}
\newcommand{\bV}{\mathbf{V}}

\usepackage{enumitem}

\usepackage{amsmath,amsthm, upgreek}
\usepackage{centernot}
\counterwithin{equation}{section}
\usepackage{nicefrac,bbm}
\usepackage{bm}
\usepackage{cancel}
\theoremstyle{definition}   %   keine kursive Schrift in der "theorem" - Umgebung
\usepackage{apptools,mathtools,chngcntr}
\usepackage[dvipsnames,svgnames]{xcolor}
\usepackage{tikz}
\usepackage{pgfplots}
\usepgfplotslibrary{groupplots}
\usepackage{pgfplotstable}
\usepackage{subcaption}
\usepackage[font=small]{caption}
\usepackage{wrapfig}
\pgfplotsset{compat=newest}
\usepgfplotslibrary{fillbetween}
\usetikzlibrary{patterns}
\usepgfplotslibrary{colormaps}
\usetikzlibrary{arrows.meta}

\usepackage[maxbibnames=50, maxcitenames=2, uniquelist=false,
    natbib=true, style=authoryear-comp,
    isbn=false, url=false,giveninits=true,dashed=false]{biblatex}   % Literaturverzeichnis
\bibliography{bibliography}
\AtBeginBibliography{} % no comma before last author in bibliography
\usepackage[ngerman, main=english]{babel}
\usepackage[autostyle]{csquotes}% Anführungszeichen vereinfacht
\usepackage[T1]{fontenc}
\usepackage{hyperref}
\usepackage[affil-it]{authblk}
\usepackage[capitalise]{cleveref}
\crefname{section}{Sect.}{Sect.}
\crefname{table}{Tab.}{Tab.}

\newtheorem{definition}{Definition}[section]
\newtheorem{theorem}[definition]{Theorem}%[section]
\newtheorem{corollary}[definition]{Corollary}%[section]

\definecolor{changes}{RGB}{192, 0, 0}

\title{
Polyconvexity does not imply true-stress-true-strain monotonicity\\ in the incompressible three-dimensional case
}
\bigbreak
\author[1,*]{Dominik~K.~Klein}
\author[2]{Maximilian~P.~Wollner}
\author[3]{Patrizio~Neff}
\affil[1]{\footnotesize Cyber-Physical Simulation, Department of Mechanical Engineering, TU Darmstadt, 64293 Darmstadt, Germany}
\affil[2]{\footnotesize Institute of Biomechanics, Graz University of Technology, Stremayrgasse 16/2, 8010, Graz, Austria}
\affil[3]{\footnotesize Chair of Nonlinear Analysis and Modelling, University of Duisburg-Essen, Thea-Leymann-Straße 9, 45127, Essen, Germany}
\affil[*]{\footnotesize Corresponding author. E-mail addresses: klein@cps.tu-darmstadt.de, wollner@tugraz.at, patrizio.neff@uni-due.de}
\date{June 10, 2026}
\begin{document}
\maketitle
\par\noindent\rule{\textwidth}{0.4pt}
\begin{abstract}
\noindent 
We study constitutive conditions of hyperelastic potentials for incompressible material behavior in three dimensions. By means of a counterexample, we show that polyconvexity does not imply true-stress-true-strain monotonicity. Thus, polyconvexity alone is not strong enough to guarantee a physically reasonable response for idealized elasticity.
\end{abstract}
\bigbreak
\noindent{\small\textbf{Keywords:} hyperelasticity, incompressibility, constitutive inequalities, polyconvexity, true-stress-true-strain monotonicity, Hill's inequality, Legendre-Hadamard ellipticity}

\par\noindent\rule{\textwidth}{0.4pt}
%
%========================================================================

% \linenumbers

%%%%%%%%%%%%%
%	Introduction
%%%%%%%%%%%%%
\section{Introduction}

We consider incompressible hyperelastic potentials $W$ and the associated Cauchy stress tensor $\bsigma$ given by
\begin{equation}
W:\SL(3)\times\bbR\rightarrow\bbR\,,\quad (\bF,\,p)\mapsto \widetilde{W}(\bF)-p\,(J-1) \qquad \text{and}\qquad \bsigma=\mathrm{D}_{\bF}\widetilde{W}\,\bF^T-p\bI\,,
\end{equation}
where $\bF=\mathrm{D}\boldsymbol{\varphi}$ is the deformation gradient derived from the deformation $\boldsymbol{\varphi}$, $p$ is a Lagrange multiplier ensuring incompressibility ($J=\det\bF=1$), and $\bI$ is the second-order identity tensor. Here, $\operatorname{SL}(3):=\big\{\bF \in\allowbreak \;\bbR^{3\times 3}\,\rvert\,\allowbreak \det \bF =1\big\}$ denotes the special linear group in three dimensions. We furthermore assume isotropy and $W\in\cC^2$. One of the main challenges in material theory is the formulation of constitutive constraints imposed on $W$ and the investigation on how they relate to each other. Setting aside more obvious requirements such as frame indifference, we focus on the following conditions:
\begin{definition}
 The potential $W$ is polyconvex if there exists a representation
    \begin{equation}
        W(\bF)=\cP(\bF,\,\cof\bF)\,,
    \end{equation}
    where $\cP:\bbR^{3\times 3}\times \bbR^{3\times 3}\rightarrow\bbR$ is a convex function. Note that the representation of polyconvex functions is non-unique.
\end{definition}
\begin{definition}
The potential $W$ is rank-one convex if it satisfies
\begin{equation}\label{eq:rank_one}
 W(\bF+t\,\ba\otimes\bb)\leq t\, W(\bF+\ba\otimes\bb)+(1-t)\,W(\bF)\quad \forall\,t\in[0,1],\,\ba,\bb\in\bbR^3 \quad \text{with} \quad \ba\otimes\bb\in\operatorname{T}_{\operatorname{SL}(3)}(\bF)\,,
\end{equation}
where
\begin{equation}
  \operatorname{T}_{\operatorname{SL}(3)}(\bF):=\big\{\bH \in\allowbreak \;\mathbb{R}^{3\times 3}\,\rvert\,\allowbreak \bigl\langle\bH,\bF^{-T}\bigr\rangle=0\big\}
\end{equation} 
is the tangent space to $\SL(3)$ at $\bF$ \parencite{Dunn2003}, which is equivalent to $\bF+\ba\otimes\bb\in\operatorname{SL}(3)$ \parencite[App.~A]{Klein2026a}. Given sufficient differentiability, \eqref{eq:rank_one} implies the Legendre-Hadamard (LH) ellipticity condition
      \begin{equation}
  \bigl\langle \mathrm{D}^2_{\bF} W(\bF).(\ba\otimes\bb),\ba\otimes\bb \bigr\rangle\geq 0 \quad \forall\,\ba,\bb\in\bbR^3\,,\,\,\,\ba\otimes\bb\in\operatorname{T}_{\operatorname{SL}(3)}(\bF)\,,
      \end{equation}
where $\langle(\bullet),(\bullet)\rangle$ denotes the inner product between two tensors of equal order and the dot in $(\bullet).(\bullet)$ denotes the double contraction of a fourth-order tensor with a second-order tensor resulting in a second-order tensor.
\end{definition}
\begin{definition}
The potential $W$ fulfills the {true-stress-true-strain monotonicity} (TSTS-M) condition if the Cauchy stress satisfies
    \begin{equation}
        \bigl\langle  \bsigma (\log\bV_{2})- \bsigma(\log\bV_{1}),\log\bV_{2} - \log\bV_{1}\bigr\rangle > 0\qquad\forall\,\bV_1,\bV_2\in\SYM^{++}(3)\,,\,\,\,\bV_{1} \neq \bV_{2}\,,
    \end{equation}
    where $\bV=\sqrt{\bF\,\bF^T}$ is the right stretch tensor and $\log\bV$ denotes the Hencky strain \parencite{Neff2025c,Neff2016a}. Here, $\SYM^{++}(3):=\big\{\bV \in \bbR^{3\times 3}\,\rvert\,\allowbreak \bV=\bV^T,\,\langle\ba,\bV\ba\rangle>0\:\forall\,\ba\in\bbR^3\setminus\{\bnull\}\big\}$ denotes the space of positive definite second-order tensors. Moreover, for incompressibility, the TSTS-M condition is equivalent to the inequality proposed by \textcite{Hill1970} reading
    \begin{equation}\label{eq:hill}
        \bigl\langle  \btau(\log\bV_{2}) - \btau(\log\bV_{1}),\log\bV_{2} - \log\bV_{1}\bigr\rangle > 0\qquad\forall\,\bV_1,\bV_2\in\SYM^{++}(3)\,,\,\,\,\bV_{1} \neq \bV_{2}\,,\,\,\,\tr\left(\log{\bV_i}\right)=0\,,
    \end{equation}
with the Kirchhoff stress \parencite{neff2019axiomaticintroductionarbitrarystrain,richter1949}
\begin{equation}
    \btau=\mathrm{D}_{\log\bV}\widehat{W}(\log\bV)\quad\text{with}\quad \widehat{W}(\log\bV) = W(\bF)\,,
\end{equation}
since $\btau=\bsigma$ for $\det\bF=1$ \parencite{Wollner2026b,Baaser2026}. Making use of the reduced representation 
\begin{equation}\label{eq:pot_reduced}
\widehat{W}_{\text{red}}^{\text{inc}}(\log\lambda_1,\log\lambda_2) := \Psi(\lambda_1,\lambda_2,1/(\lambda_1\lambda_2)) = \Psi(\lambda_1,\lambda_2,\lambda_3)= W(\bF)\,,
\end{equation}
convexity of $\widehat{W}_{\text{red}}^{\text{inc}}$ in $(\log\lambda_1,\log\lambda_2)$ can be derived as a sufficient and necessary condition for Hill's inequality in the incompressible case \parencite[App.~D.3]{Baaser2026}.
\end{definition}
Polyconvexity is linked to existence theorems in finite elasticity theory \parencite{Ball1976,Ball1977}. LH-ellipticity, from a physical perspective, guarantees the existence of real-valued wave speeds for solutions of the governing equations of finite elasticity theory \parencite{Zee1983}. Moreover, it is commonly employed to ensure a stable and robust behavior when applying a constitutive model in numerical applications such as the finite element method \parencite{Schroeder_Neff_Balzani_2005}, see also \parencite[Sect.~1.1]{klein2026limitationspolyconvexity}. On the other hand, the TSTS-M condition can be seen as the multiaxial analogue of the notion that the Cauchy stress must be increasing with increasing Hencky strain, which seems a reasonable requirement for a purely elastic material law. Moreover, TSTS-M ensures that the tangent operator in a hypoelastic formulation is positive definite, i.e.,
\begin{equation}
  \bigl\langle \mathrm{D}_{\log\bV} \bsigma(\log\bV).\bH,\bH \bigr\rangle > 0 \quad \forall\,\bH\in\bbR^{3\times 3}\setminus\{\bnull\} \,,
\end{equation}
and it supports a local existence proof of rate-form equilibrium, even without LH-ellipticity \parencite{Neff2026a,NEFF2025105033}.

  %	overview
 \begin{figure}[t!]
    \centering
    \includegraphics{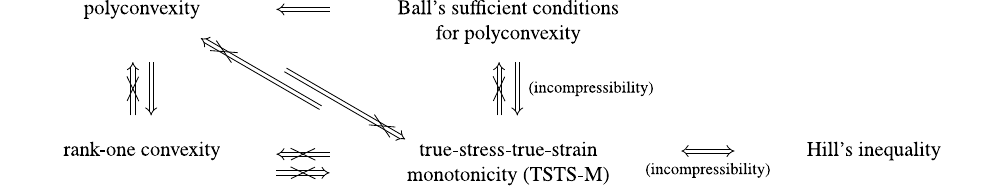}
    \caption{Overview of various constitutive constraints and their relation in isotropic incompressible hyperelasticity.}
    \label{fig:overview}
 \end{figure}
 
In the following, we consider two parameterizations for polyconvex potentials:
 \begin{theorem}\label{th:ball}
      \textcite[Thm.~5.1]{Ball1976} provides \emph{sufficient but not necessary} conditions for polyconvexity. For this, we consider
    \begin{equation}
W(\bF)=g(\boldsymbol{\varkappa}_\lambda)\quad \text{with}\quad \boldsymbol{\varkappa}_\lambda=(\lambda_1,\,\lambda_2,\,\lambda_3,\,\lambda_2\lambda_3,\,\lambda_1\lambda_3,\,\lambda_1\lambda_2)\quad \text{and}\quad\widetilde{g}(\lambda_1,\,\lambda_2,\,\lambda_3)=g(\boldsymbol{\varkappa}_\lambda)\,,
    \end{equation}
where the principal stretches $\lambda_i$ are the singular values of $\bF$ and $\lambda_i\lambda_j$ are the singular values of $\cof\bF$. Polyconvexity is fulfilled if {(i)} $g$ is convex and monotonically increasing and {(ii)} $\widetilde{g}$ satisfies the permutation invariance\footnote{For Ball's original proof, a stricter permutation invariance is required, i.e., it must be independently hold for the first three and the second three arguments of $g$. However, \textcite[Sect. 3.3]{Geuken2026} recently showed that the relaxed permutation invariance in~\eqref{eq:perm_inv_Ball} is also sufficient.}
\begin{equation}\label{eq:perm_inv_Ball}
    \widetilde{g}(\lambda_1,\,\lambda_2,\,\lambda_3)=\widetilde{g}(\lambda_1,\,\lambda_3,\,\lambda_2)=\widetilde{g}(\lambda_2,\,\lambda_1,\,\lambda_3)=\widetilde{g}(\lambda_2,\,\lambda_3,\,\lambda_1)=\widetilde{g}(\lambda_3,\,\lambda_1,\,\lambda_2)=\widetilde{g}(\lambda_3,\,\lambda_2,\,\lambda_1)\,.
\end{equation} 
 \end{theorem}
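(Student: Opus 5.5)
The plan is to exhibit an explicit convex function $\cP$ of $(\bF,\cof\bF)$ that agrees with $W$; this is the route of \textcite[Thm.~5.1]{Ball1976}. Two classical facts about singular values do the work.

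\textbf{Step 1 (singular values as convex functions).} By von Neumann's trace inequality,
\[
\sum_{i=1}^k \nu_i(\bF)=\sup\bigl\{\langle \bF,\bC\rangle : \bC\in\bbR^{3\times3} \text{ of rank } k \text{ with orthonormal singular structure}\bigr\}.
\]
Hence every partial sum of the ordered singular values $\nu_1\ge\nu_2\ge\nu_3$ is a convex function of the matrix. More generally, for a symmetric (permutation invariant) convex and monotonically increasing function $h:\bbR^3_{\ge0}\to\bbR$, the map $\bF\mapsto h(\nu(\bF))$ is convex. This is the Davis/Lewis characterization: $\nu(\cdot)$ is weakly majorization-convex, and increasing Schur-convex functions compose well with it. The same statement applied to $\cof\bF$ shows that any increasing symmetric convex function of the singular values $\lambda_i\lambda_j$ of $\cof\bF$ is convex in $\cof\bF$.

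\textbf{Step 2 (the joint function).} Define
\[
\cP(\bF,\bH):=g\bigl(\nu_1(\bF),\nu_2(\bF),\nu_3(\bF),\,\mu_{\pi(1)}(\bH),\mu_{\pi(2)}(\bH),\mu_{\pi(3)}(\bH)\bigr),
\]
where $\mu_i(\bH)$ are the singular values of $\bH$, and the pairing $\pi$ matches them so that at $\bH=\cof\bF$ one recovers $\boldsymbol{\varkappa}_\lambda$. Concretely, $\mu$ ordered increasingly is paired with $\nu$ ordered decreasingly, since $\lambda_2\lambda_3$ is the smallest cofactor value when $\lambda_1$ is largest. Then $\cP(\bF,\cof\bF)=g(\boldsymbol{\varkappa}_\lambda)=W(\bF)$. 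Permutation invariance (ii) guarantees that this is independent of how the $\lambda_i$ are labelled.

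Convexity of $\cP$ should follow in three moves. First, the ordered maps $\bF\mapsto(\nu_1,\nu_1+\nu_2,\nu_1+\nu_2+\nu_3)$ are componentwise convex by Step 1, and similarly for $\bH$. Second, write $g$ in terms of these partial sums. Third, use that a convex function which is nondecreasing in each argument, composed with componentwise convex maps, is convex. Since $g$ is only assumed monotone in the raw arguments, not in the partial sums, the cleaner route is Ball's original one. Under the stricter invariance, each block of three arguments is separately symmetric. Then for fixed $\bH$, a symmetric convex increasing function of $\nu(\bF)$ is convex in $\bF$ by the Davis/Lewis theorem, and one extends this to joint convexity by a supporting-hyperplane argument. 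At a point $(\bF_0,\bH_0)$, take a subgradient $(\alpha,\beta)\in\bbR^3_{\ge0}\times\bbR^3_{\ge0}$ of $g$ (nonnegative by monotonicity), chosen ordered compatibly with the singular values via symmetry. Von Neumann's inequality then gives
\[
\sum_i\alpha_i\nu_i(\bF)\ge\langle \bF,\bC_\alpha\rangle
\]
with equality at $\bF_0$, which yields an affine minorant of $\cP$ touching at $(\bF_0,\bH_0)$. Supremum of affine minorants then gives convexity.

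\textbf{Main obstacle.} The hard step is the relaxed invariance \eqref{eq:perm_inv_Ball}, where only simultaneous permutations are allowed. There the subgradient ordering must be compatible for $\nu(\bF)$ and $\mu(\bH)$ at the same time, and this only holds on the "diagonal" $\bH=\cof\bF$, where the cofactor ordering is automatically reversed. For this step I would not reprove the argument. I would invoke \textcite[Sect.~3.3]{Geuken2026}, checking only that its hypotheses coincide with (i)--(ii). Non-necessity would be illustrated by a known polyconvex potential, for instance $\cP$ using $\det\bF$ or $\langle\bF,\bF\rangle^2$ cross terms, which is not expressible with $g$ monotone.
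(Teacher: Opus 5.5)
The paper gives no proof of its own here: the statement is quoted from \textcite[Thm.~5.1]{Ball1976}, and the relaxed invariance is attributed to \textcite[Sect.~3.3]{Geuken2026}. Your sufficiency sketch for blockwise invariance is sound. You take a nonnegative subgradient $(\alpha,\beta)$ of $g$, ordered like $\nu(\bF_0)$ and $\mu(\bH_0)$ via transpositions, and then use von Neumann's inequality to get an affine minorant of $\cP$ touching at $(\bF_0,\bH_0)$. Deferring the relaxed case is exactly what the paper does.

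The genuine gap is the ``not necessary'' half: your witnesses fail. On $\SL(3)$ one has $\det\bF\equiv1$, so $\det\bF$-terms only add constants. Moreover, $\langle\bF,\bF\rangle^2=(\lambda_1^2+\lambda_2^2+\lambda_3^2)^2$ is itself of Ball's form, with $g$ convex and nondecreasing on the orthant. A valid witness is already in the paper. The softplus potential $\psi$ with parameters \eqref{eq:mat_param} is polyconvex (Theorem~\ref{th:pc}) but violates TSTS-M (Theorem~\ref{th:non_tsts}). Potentials satisfying (i)--(ii) fulfil TSTS-M \parencite[Sect.~3.1.4]{Wollner2026b}, so $\psi$ admits no representation of Ball's type. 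Since $\sigma(1.5)>\sigma(2.5)$ violates even the non-strict inequality, strictness issues play no role.

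Your diagnosis of the relaxed case is also inaccurate. With only simultaneous permutations, transposing $i,j$ at $(x,y)=\boldsymbol{\varkappa}_\lambda$ yields only $(x_i-x_j)(\alpha_i-\alpha_j)+(y_i-y_j)(\beta_i-\beta_j)\ge0$. That is the sum of the two quantities whose separate nonnegativity the von Neumann step needs, not each one individually. Compatibility can genuinely fail on the diagonal. Take $g(x,y)=\sum_i\max\{0,\,x_i+k\,y_i-c\}$ with $\lambda_1>\lambda_2>\lambda_3$, $k>1/\lambda_3$ and $x_2+ky_2<c<x_3+ky_3$. Then the gradient is $(\alpha,\beta)=(0,0,1,0,0,k)$, and near $(\bF_0,\cof\bF_0)$ your Step~2 function equals $\nu_{\min}(\bF)+k\,\mu_{\max}(\bH)-c$. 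This is not convex in $\bF$: shearing in the plane of the two smallest singular directions lowers $\nu_{\min}$ for both $\pm t$. So the explicit $\cP$ is not a valid witness under \eqref{eq:perm_inv_Ball}.

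A short route that covers both invariances is Theorem~\ref{th:SSV}. Set $h(x,y):=g(|x_1|,|x_2|,|x_3|,|y_1|,|y_2|,|y_3|)$, extending $g$ to the closed orthant if needed. Since $g$ is convex and nondecreasing, $h$ is convex and continuous. Furthermore, $\widetilde h(\nu)=\widetilde g(|\nu_1|,|\nu_2|,|\nu_3|)=W(\bF)$ is invariant under \eqref{eq:SSV_symmetries} trivially and under permutations by \eqref{eq:perm_inv_Ball}. This trades Ball's explicit construction for reliance on the Wiedemann--Peter characterization, but needs no ordering argument at all.
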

 \begin{theorem}\label{th:SSV}
\textcite{Wiedemann2026} provide \emph{sufficient and necessary} conditions for polyconvexity in terms of the signed singular values $\nu_i$ of $\bF$ and the signed singular values $\nu_i\nu_j$ of $\cof\bF$.\footnote{Sufficient and necessary conditions for polyconvexity of isotropic potentials were also established in earlier works \parencite{Mielke2005, Rosakis1997}, for instance, \textcite{Dacorogna1993} proposed conditions similar to Theorem~\ref{th:SSV} for the two-dimensional case, referred to as \emph{diagonal polyconvexity}.} For this, we consider
        \begin{equation}\label{eq:pot_SSV}
W(\bF)=h(\boldsymbol{\varkappa}_\nu)\quad \text{with}\quad \boldsymbol{\varkappa}_\nu=(\nu_1,\,\nu_2,\,\nu_3,\,\nu_2\nu_3,\,\nu_1\nu_3,\,\nu_1\nu_2)\quad \text{and}\quad\widetilde{h}(\nu_1,\,\nu_2,\,\nu_3)=h(\boldsymbol{\varkappa}_\nu)\,.
    \end{equation}
Polyconvexity is fulfilled if and only if (i) $h$ is convex, (ii) $\widetilde{h}$ is $\Pi_3$-invariant, which includes the six permutations introduced in~\eqref{eq:perm_inv_Ball} and the four symmetries
\begin{equation}\label{eq:SSV_symmetries}
\begin{aligned}
    \widetilde{h} (\nu_1,\,\nu_2,\,\nu_3)=\widetilde{h} (-\nu_1,\,-\nu_2,\,\nu_3)=\widetilde{h} (-\nu_1,\,\nu_2,\,-\nu_3)=\widetilde{h} (\nu_1,\,-\nu_2,\,-\nu_3)\,,
\end{aligned}
\end{equation}
and (iii) $h$ is lower semi-continuous.
 \end{theorem}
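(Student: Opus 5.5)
The plan is to prove the two directions separately. Sufficiency will come from a conjugate-duality construction of the convex representative $\cP$, and necessity from restricting a given representative to diagonal matrices and then symmetrizing. Throughout, I use that $\widetilde h$ is invariant under the group $\Pi_3$. This is the group generated by the six permutations in~\eqref{eq:perm_inv_Ball} and the even sign changes in~\eqref{eq:SSV_symmetries}. It is exactly the group of signed permutation matrices $\bC$ with $\det\bC=1$, i.e.\ the diagonal action of $\SL(3)\cap O(3)$ on signed singular values. For such $\bC$ we have $\cof\bC=\bC$, so the action on $\boldsymbol{\varkappa}_\nu$ is the induced action on $(\nu,\text{cof-part})\in\bbR^6$, and $h$ may be assumed invariant under it.

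\emph{Sufficiency.} I take $h$ convex, lower semi-continuous and $\Pi_3$-invariant. If $h$ is proper, Fenchel--Moreau gives
\[
h(x,y)=\sup_{(a,b)\in\bbR^6}\bigl(\langle a,x\rangle+\langle b,y\rangle-h^*(a,b)\bigr),
\]
and $h^*$ inherits the $\Pi_3$-invariance. I would then define
\[
\cP(\bF,\bH):=\sup_{(a,b)\in\bbR^6}\ \sup_{\mathbf{Q},\mathbf{R}\in \mathrm{SO}(3)}\Bigl(\bigl\langle \mathbf{Q}\diag(a)\mathbf{R},\bF\bigr\rangle+\bigl\langle \mathbf{Q}\diag(b)\mathbf{R},\bH\bigr\rangle-h^*(a,b)\Bigr).
\]
This is a supremum of affine functions of $(\bF,\bH)$, hence convex. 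It remains to show $\cP(\bF,\cof\bF)=h(\boldsymbol{\varkappa}_\nu(\bF))$.

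The lower bound "$\geq$" follows by taking the signed SVD $\bF=\mathbf{Q}_0\diag(\nu)\mathbf{R}_0$ with $\mathbf{Q}_0,\mathbf{R}_0\in\mathrm{SO}(3)$. The same rotations diagonalize the cofactor: $\cof\bF=\mathbf{Q}_0\diag(\nu_2\nu_3,\nu_1\nu_3,\nu_1\nu_2)\mathbf{R}_0$. Choosing $\mathbf{Q}=\mathbf{Q}_0$ and $\mathbf{R}=\mathbf{R}_0$ in the supremum then recovers the Fenchel formula for $h(\boldsymbol{\varkappa}_\nu)$.

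The upper bound "$\leq$" requires a \emph{coupled} von Neumann-type trace inequality: for all $\mathbf{Q},\mathbf{R}\in\mathrm{SO}(3)$,
\[
\langle \mathbf{Q}\diag(a)\mathbf{R},\bF\rangle+\langle \mathbf{Q}\diag(b)\mathbf{R},\cof\bF\rangle\le \max_{\bC\in\Pi_3}\bigl(\langle a,\bC\nu\rangle+\langle b,\bC\,\mathrm{cof}(\nu)\rangle\bigr),
\]
where $\mathrm{cof}(\nu)=(\nu_2\nu_3,\nu_1\nu_3,\nu_1\nu_2)$. Once this holds, the $\Pi_3$-invariance of $h^*$ gives "$\leq$". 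The improper case $h\equiv+\infty$ is trivial, and finite-valued $h$ is proper.

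\emph{Main obstacle.} I expect the coupled trace inequality to be the hardest step. By bi-invariance it reduces to maximizing
\[
f(\mathbf{Q},\mathbf{R})=\langle \diag(a),\mathbf{Q}^T\diag(\nu)\mathbf{R}^T\rangle+\langle\diag(b),\mathbf{Q}^T\diag(\mathrm{cof}(\nu))\mathbf{R}^T\rangle
\]
over the compact set $\mathrm{SO}(3)\times\mathrm{SO}(3)$. My first attempt would be a critical-point analysis in the spirit of \textcite{Mielke2005}, in two steps:
\begin{enumerate}[label=(\roman*)]
\item Compute the first-order conditions, i.e.\ the vanishing of the skew parts of $\diag(a)\,\mathbf{M}^T+\diag(b)\,\mathbf{N}^T$ and of the corresponding right-hand expression, where $\mathbf{M}$ and $\mathbf{N}$ denote the two rotated diagonal matrices.
\item Show from these conditions that at a maximizer $\mathbf{Q}$ and $\mathbf{R}$ can be taken as signed permutations with determinant one.
\end{enumerate}
For generic distinct $\nu_i$ this should follow from a perturbation argument along one-parameter rotations about coordinate axes, which reduces the question to $2\times2$ blocks. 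In each block the classical two-dimensional signed-singular-value inequality of \textcite{Dacorogna1993} applies, because in 2D $\nu_i\nu_j$ is linear in the block's determinant. Degenerate $\nu$ would then be handled by continuity.

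\emph{Necessity.} Let $W=\cP(\bF,\cof\bF)$ with $\cP$ convex. I would first replace $\cP$ by its lower semi-continuous hull. This does not change the values on the graph $\{(\bF,\cof\bF)\}$, since $W\in\cC^2$ and the convex hull is squeezed between $W$-compatible bounds. Next define $h_0(x,y):=\cP(\diag x,\diag y)$, which is convex and lsc, and set
\[
h:=\frac{1}{|\Pi_3|}\sum_{\bC\in\Pi_3}h_0\circ\bC .
\]
Averaging keeps $h$ convex and lsc and makes it $\Pi_3$-invariant. It still agrees with $W$ on $\boldsymbol{\varkappa}_\nu$, because $W$ is invariant under $\bF\mapsto\bC\bF$ with $\bC\in\mathrm{SO}(3)$ and $\cof(\bC\bF)=\bC\cof\bF$. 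Hence $h$ satisfies conditions (i)--(iii), which completes the characterization.
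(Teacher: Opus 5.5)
The paper does not prove this statement; it imports it from Wiedemann and Peter (and from Geuken et al., Cor.~2, for the incompressible setting). Your proposal therefore has to stand on its own, and at its decisive step it does not. Swapping the two suprema, your $\cP$ equals $\cP(\bF,\bH)=\max_{\mathbf Q,\mathbf R\in\mathrm{SO}(3)}h\bigl(d(\mathbf Q^T\bF\mathbf R^T),\,d(\mathbf Q^T\bH\mathbf R^T)\bigr)$, where $d$ extracts the diagonal. The lower bound and the reduction of the upper bound to your coupled trace inequality are correct.

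That inequality, however, is equivalent to the inclusion of $\{(d(G),d(\cof G)) : G\in\mathrm{SO}(3)\diag(\nu)\,\mathrm{SO}(3)\}$ in $\operatorname{conv}\{(C\nu,C\nu^{\#}) : C\in\Pi_3\}$, with $\nu^{\#}:=(\nu_2\nu_3,\nu_1\nu_3,\nu_1\nu_2)$. This is a Horn--Thompson-type convexity theorem, it carries the whole weight of the sufficiency direction, and you only sketch it. The sketch does not work as described, for three reasons:
\begin{itemize}
\item $L(G)=\langle\diag(a),G\rangle+\langle\diag(b),\cof G\rangle$ is not linear in $G$ on the orbit. The signed von Neumann (Thompson) inequality therefore cannot be invoked directly, except in special cases such as $b\parallel a$.
\item First-order conditions along coordinate rotations only locate critical points. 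They say nothing about which critical point is the global maximizer.
\item A rotation about a coordinate axis does not make $G$ block-diagonal. The two-dimensional inequality of Dacorogna and Koshigoe is thus never applicable to a general $G$.
\end{itemize}
If you pursue this, a more useful starting point is the identity $\cof\mathbf Q=\mathbf Q$ on $\mathrm{SO}(3)$. For $G=\mathbf Q\diag(\nu)\mathbf R$ it gives $G_{ii}=\sum_k M_{ik}\nu_k$ and $(\cof G)_{ii}=\sum_k M_{ik}\nu^{\#}_k$ with the same matrix $M_{ik}=Q_{ik}R_{ki}$, so $L$ becomes linear in $M$. What remains is a genuine convexity statement about such matrices $M$, which has to be proved or cited.

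Second, your identification of $\Pi_3$ is incorrect, and this breaks the necessity step as written. The group generated by the six permutations and the even sign changes consists of the maps $x\mapsto DPx$ with $D=\diag(\varepsilon)$, $\varepsilon_1\varepsilon_2\varepsilon_3=1$, and $P$ an arbitrary permutation matrix. It contains $P_{12}$, which has $\det P_{12}=-1$. Conversely, the signed permutation matrix $C=-P_{12}$ has $\det C=1$, yet it maps $\nu$ to $(-\nu_2,-\nu_1,-\nu_3)$, flips the sign of $\nu_1\nu_2\nu_3$, and satisfies $(C\nu)^{\#}=-C\nu^{\#}$. Averaging $h_0$ over the group you describe would therefore evaluate $\cP$ at points $(\diag(C\nu),\diag(C\nu^{\#}))$ that lie off the graph $\{(\bF,\cof\bF)\}$, indeed off $\SL(3)$, and $h(\boldsymbol{\varkappa}_\nu)=W(\bF)$ would fail. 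Your justification via left multiplication $\bF\mapsto C\bF$ does not help either, since $C\diag(\nu)$ is not diagonal.

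The correct mechanism is two-sided. For $C=DP\in\Pi_3$ put $\widehat{P}=(\det P)\,P\in\mathrm{SO}(3)$. Then $\diag(C\nu)=(D\widehat{P})\diag(\nu)\widehat{P}^{T}$ with both factors in $\mathrm{SO}(3)$, so isotropy gives $W(\diag(C\nu))=W(\diag(\nu))$. Moreover, $\cof(\mathbf S_1X\mathbf S_2)=\mathbf S_1(\cof X)\mathbf S_2$ for $\mathbf S_1,\mathbf S_2\in\mathrm{SO}(3)$ yields $(C\nu)^{\#}=C\nu^{\#}$. With this repair your necessity argument is sound. The lower semi-continuous hull step is superfluous, because a real-valued convex $\cP$ is already continuous. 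The missing piece is therefore the coupled inequality.
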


The following relations are well-established for incompressible hyperelasticity in three dimensions: 
\begin{itemize}[label=\textbf{--}]
    \setlength{\itemsep}{0pt}
\item Rank-one convexity does not imply polyconvexity \parencite{Ciarlet1988}.
\item TSTS-M does not imply polyconvexity or rank-one convexity \parencite[Sect.~3.3]{Wollner2026b}.
\item Polyconvexity implies rank-one convexity \parencite{Ciarlet1988}. Moreover, we recently showed that the polyconvex parameterization by Ball, cf.\ Theorem~\ref{th:ball}, implies TSTS-M \parencite[Sect.~3.1.4]{Wollner2026b}.
\end{itemize}

%	softplus
\begin{figure}[t!]
    \centering            \includegraphics[width=0.6667\textwidth]{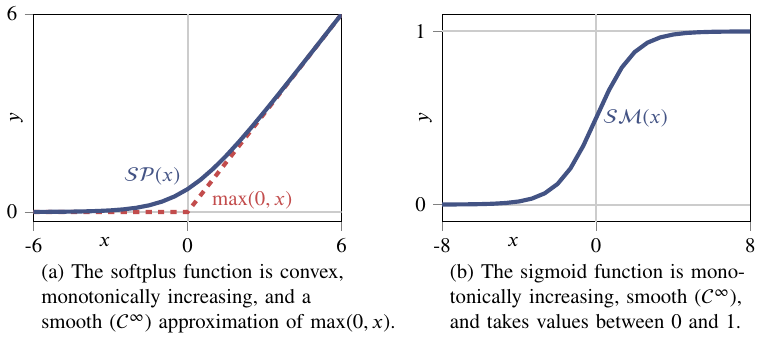}
 \caption{Visualization of the softplus function $\SP(x)$ employed for the potential in \eqref{eq: strain-energy function}. The stress response of the potential, cf.\ \eqref{eq:tau_model} and \eqref{eq:cauchy_model}, depends on the sigmoid function $\SM(x)=\mathrm{D}_x\SP(x)$, which is the first derivative of the softplus function.}
         \label{fig:softplus}
\end{figure}

Above introduced conditions are of particular interest for one of the main open problems of material theory, often referred to as Truesdell's Hauptproblem \parencite{Truesdell1956}: If we disregard any inelastic effects such as fatigue, softening, or plasticity, what set of constitutive constraints is required to represent idealized elasticity? Setting aside obvious requirements such as frame indifference, one approach could be the combination of polyconvexity (and thus LH-ellipticity) in combination with TSTS-M \parencite{Wollner2026a,Neff2025c}. Together, these ensure a monotonically increasing Cauchy stress response for load scenarios where we would expect such a behavior, notably for uniaxial tension, equibiaxial tension, and simple shear. While additional constraints might be required to represent idealized elasticity\footnote{We recently raised the question whether for idealized elasticity, in addition to TSTS-M which restricts the slope of the Cauchy stress, an additional constraint on the curvature of the Cauchy stress might be required \parencite[Sect.~5.4]{Wollner2026b}.}, the aforementioned conditions already go a long way towards this goal. In incompressibility, we recently showed that Ball's sufficient conditions for polyconvexity also imply TSTS-M \parencite[Sect.~3.1.4]{Wollner2026b}, which applies to a variety of models based on principal stretches and models based on the main invariants of the right Cauchy-Green tensor $\bC = \bF^T\bF$. However, the parameterization of Ball is only sufficient but not necessary for polyconvexity. Therefore, in the incompressible three-dimensional case, it remains unclear whether polyconvexity in general implies TSTS-M.\footnote{In the three-dimensional compressible case, polyconvexity does not imply TSTS-M \parencite{Wollner2026a}, while in the incompressible two-dimensional case, polyconvexity does imply TSTS-M \parencite{Ghiba2026b}. However, the transition between two and three dimensions or compressibility and incompressibility entails some theoretical nuances, requiring independent investigations for each setting. Thus, it remained an open problem whether polyconvexity or rank-one convexity imply TSTS-M in the incompressible three-dimensional case.} In this contribution, we address this and show the following: 
\begin{itemize}[label=\textbf{--}]
    \setlength{\itemsep}{0pt}
\item Polyconvexity does not imply TSTS-M in the incompressible three-dimensional case.
\end{itemize}
For this, we provide a counterexample which is polyconvex but does not satisfy TSTS-M, making use of the sufficient and necessary conditions for polyconvexity proposed by \textcite{Wiedemann2026}.\footnote{The construction of such a counterexample was part of a series of challenges posed by Patrizio Neff. In particular, he still offers a prize money of 500€ for the construction of a compressible energy that simultaneously satisfies polyconvexity (or rank-one convexity) and TSTS-M for all deformation states \parencite{Neff2025}.} Since polyconvexity implies rank-one convexity, our counterexample also establishes the following result:
\begin{itemize}[label=\textbf{--}]
\setlength{\itemsep}{0pt}
\item Rank-one convexity does not imply TSTS-M in the incompressible three-dimensional case. 
\end{itemize}
Overall, our work completes the investigation of the relationships between the above introduced constitutive conditions, cf.\ Fig.\ \ref{fig:overview}.

%	uniax example
\begin{figure}[t!]
    \centering            \includegraphics[width=\textwidth]{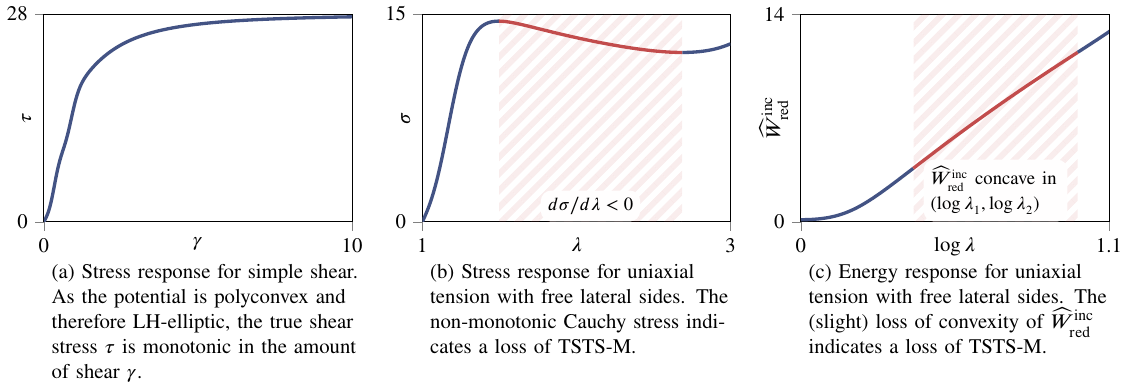}
 \caption{Evaluation of the potential in~\eqref{eq: strain-energy function} with parameters $  (a,\,b,\,c)=(-5,\,-14,\,-22)$. The potential is polyconvex but violates TSTS-M \parencite{Neff2025b}. This counterexample shows that polyconvexity does not imply TSTS-M in the incompressible three-dimensional case.}
         \label{fig:uniax_example}
\end{figure}

%%%%%%%%%%%%%
%	A counterexample
%%%%%%%%%%%%%
\section{A counterexample}

Let us consider the isotropic incompressible potential
\begin{equation}
\label{eq: strain-energy function}
\begin{aligned}
    \psi(\nu_1, \nu_2, \nu_3)= \phi(\nu_1, \nu_2, \nu_3) + \phi(-\nu_1, -\nu_2, \nu_3) + \phi(\nu_1, -\nu_2, -\nu_3) + \phi(-\nu_1, \nu_2, -\nu_3) + \text{const.}\,,
\end{aligned}
\end{equation}
based on the softplus function $\SP(x) = \log(1 + \exp x)$, which we visualize in Fig.~\ref{fig:softplus}(a), with
\begin{equation}\label{eq:pot_one}
 \phi(\nu_1, \nu_2, \nu_3)= \SP(\theta(\boldsymbol{\varkappa}_\nu))\quad\text{and}\quad\theta(\boldsymbol{\varkappa}_\nu)=a\,(\nu_1 + \nu_2 + \nu_3) + b\,(\nu_1\nu_2 + \nu_2\nu_3 + \nu_3\nu_1) + c\,,
\end{equation}
where $a,b,c \in \mathbb{R}$ are material parameters.\footnote{The structure of the potential in \eqref{eq: strain-energy function} closely resembles constitutive models based on neural networks, which are often based on linear transformations and the softplus function \parencite{Wollner2026b}.} 

\begin{theorem}\label{th:pc}
The potential $\psi$ in~\eqref{eq: strain-energy function} is polyconvex for all choices of $a,b,c\in\bbR$.
\end{theorem}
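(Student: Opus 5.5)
We verify the three conditions of Theorem~\ref{th:SSV} for $\psi$, using the representation $\psi(\nu_1,\nu_2,\nu_3)=h(\boldsymbol{\varkappa}_\nu)$.

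\emph{Convexity.} Define $h:\bbR^6\to\bbR$ as a sum of four terms, one for each summand of~\eqref{eq: strain-energy function}. The first term is $\SP$ composed with the affine map
\begin{equation*}
\boldsymbol{\varkappa}=(x_1,x_2,x_3,y_1,y_2,y_3)\mapsto a\,(x_1+x_2+x_3)+b\,(y_1+y_2+y_3)+c\,.
\end{equation*}
Here $y_1=\nu_2\nu_3$, $y_2=\nu_1\nu_3$, $y_3=\nu_1\nu_2$, so the $b$-term is exactly the sum of the cofactor entries.

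For the other three summands, the sign flips act linearly on $\boldsymbol{\varkappa}_\nu$. For instance, $(\nu_1,\nu_2,\nu_3)\mapsto(-\nu_1,-\nu_2,\nu_3)$ sends $(x_1,x_2,x_3,y_1,y_2,y_3)$ to $(-x_1,-x_2,x_3,-y_1,-y_2,y_3)$. Hence each summand is $\SP$ composed with an affine function of $\boldsymbol{\varkappa}$, with coefficients in $\{\pm a,\pm b\}$ and offset $c$.

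The softplus $\SP$ is convex on $\bbR$, since $\SP''(x)=\SM(x)(1-\SM(x))>0$. Convex functions composed with affine maps are convex, and sums of convex functions plus a constant are convex. Therefore $h$ is convex on all of $\bbR^6$. This holds regardless of the signs of $a$ and $b$, because no monotonicity of $h$ is needed, unlike in Theorem~\ref{th:ball}.

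\emph{Lower semicontinuity.} $\SP$ is smooth, so $h$ is continuous and in particular lower semicontinuous.

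\emph{$\Pi_3$-invariance.} This is the step requiring bookkeeping. $\theta$ depends only on the elementary symmetric polynomials $\nu_1+\nu_2+\nu_3$ and $\nu_1\nu_2+\nu_2\nu_3+\nu_3\nu_1$, so $\phi$ is invariant under all six permutations.

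The four maps in~\eqref{eq:SSV_symmetries}, namely the identity and the three double sign flips, form the Klein four-group $G$. This group is normalized by the permutations: conjugating a double sign flip by a permutation gives another double sign flip. Then $\psi$, up to the constant, equals
\begin{equation*}
\sum_{s\in G}\phi\circ s\,.
\end{equation*}
\begin{itemize}[label=\textbf{--}]
\item Composing with $t\in G$ gives $\sum_{s\in G}\phi\circ s\circ t$, which merely reindexes the sum.
\item Composing with a permutation $\pi$ gives $\sum_{s\in G}\phi\circ\pi\circ(\pi^{-1}s\pi)=\sum_{s'\in G}\phi\circ s'$, using $\phi\circ\pi=\phi$ and the normality of $G$.
\end{itemize}
Hence $\widetilde h=\psi$ is invariant under the full group $\Pi_3$, generated by permutations and double sign flips.

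The only potential subtlety is confirming that $h$, not just $\widetilde h$, is convex as a function of six independent variables. The affine-composition argument settles this, since each summand is written explicitly in the independent coordinates $\boldsymbol{\varkappa}$. By Theorem~\ref{th:SSV}, $\psi$ is polyconvex for all $a,b,c\in\bbR$.
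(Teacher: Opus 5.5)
Your proposal is correct and follows the paper's proof. You verify the three conditions of Theorem~\ref{th:SSV}: convexity of $h$, since each summand is $\SP$ composed with a map affine in $\boldsymbol{\varkappa}_\nu$; $\Pi_3$-invariance; and lower semicontinuity from continuity. Your normality argument for the Klein four-group makes explicit the invariance step that the paper simply asserts holds ``by construction''.
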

\begin{proof}
The potential $\psi$ satisfies the following sufficient and necessary conditions for polyconvexity, cf.\ \textcite[Cor.\ 2]{Geuken2026} and Theorem \ref{th:SSV}:
\begin{enumerate}
    \item[(i)] The potential $\psi$ is convex in $\boldsymbol{\varkappa}_\nu=(\nu_1,\,\nu_2,\,\nu_3,\,\nu_2\nu_3,\,\nu_1\nu_3,\,\nu_1\nu_2)$ for any choice of $a,b,c\in\bbR$. This can be seen as follows: The function $\theta$, defined in \eqref{eq:pot_one}, is linear in $\boldsymbol{\varkappa}_\nu$. Since the softplus function $\SP(x) = \log(1 + \exp x)$ is convex, it follows that $\SP(\theta(\boldsymbol{\varkappa}_\nu))$ is also convex. Then, notice that $\psi$ is a simple sum of functions of the form $\SP(\theta(\boldsymbol{\varkappa}_\nu))$, albeit with a change of sign in the arguments. But these sign reversals leave the linearity of $\theta$ untouched, hence each summand remains convex following the argument above.
    \item[(ii)] The function $\psi$ obeys $\Pi(3)$-invariance, since it is invariant under permutation of its arguments and satisfies the symmetries introduced in \eqref{eq:SSV_symmetries} by construction.
    \item[(iii)] The strain-energy function $\psi$ is continuous and therefore lower semi-continuous.
\end{enumerate}
\end{proof}
An alternative, more direct proof can be achieved through \textcite[Prop.\ 5.2]{Wiedemann2026}, cf.\ also \textcite[Sect.\ 3.2.2]{Geuken2026}.

As a sanity check, we may verify the monotonicity of the true-shear-stress response given the potential in \eqref{eq: strain-energy function} in simple shear, as required by rank-one convexity. For simple shear along $\be_1$-$\be_2$ in an underlying Cartesian coordinate system, the deformation gradient $\bF_\mathrm{ss}$ and the true shear stress $\tau$ are given by
\begin{equation}
[\bF_\mathrm{ss}]=\bI+\gamma\,\be_1\otimes\be_2\quad\text{and}\quad\tau=\bigl\langle\bsigma_\mathrm{ss}, \be_1\otimes\be_2\big\rangle\,,
\end{equation}
where $\gamma\in\bbR$ is the amount of shear and $\bsigma_\mathrm{ss}$ denotes the Cauchy stress tensor resulting from the elastic law given $\bF_\mathrm{ss}$. Then, if the potential is polyconvex (or rank-one convex), $\tau$ is monotonically increasing in $\gamma$ \parencite[Prop.\ 5.13]{Wollner2026a}. In the following, we employ the material parameters\footnote{For ease of exposition, we omit any units.}
\begin{equation}\label{eq:mat_param}
(a,\,b,\,c)=(-5,\,-14,\,-22)\,,
\end{equation}
for which the true shear stress given \eqref{eq: strain-energy function} can be calculated as \parencite{horgan2010}
\begin{equation}\label{eq:tau_model}
\begin{split}
\tau(\gamma) &= \biggl(\frac{\nu_1^2}{\nu_1^2+1}\,\mathrm{D}_{\nu_1}{\psi}-\frac{\nu_2^2}{\nu_2^2+1}\,\mathrm{D}_{\nu_2}{\psi}\biggr)\Bigg|_{\begin{subarray}{l}\nu_1=\big(\gamma+\sqrt{4+\gamma^2}\big)/2
\\\nu_2=2/\big(\gamma+\sqrt{4+\gamma^2}\big)\\\nu_3=1\end{subarray}} \\
&=\frac{19\,\gamma}{\sqrt{4+\gamma^2}}\biggl(\SM\Bigl(19\sqrt{4+\gamma^2}
-41\Bigr)+\SM\Bigl(19\sqrt{4+\gamma^2}+41\Bigl)-1\biggr)
\\
&\hphantom{=}\:
+9\Bigl(\SM\bigl(9\,\gamma-3\bigr)+\SM\bigl(9\,\gamma+3\bigl)-1\Bigr)\,,
\end{split}
\end{equation}
where the sigmoid function $\SM(x)=\mathrm{D}_x\SP(x)=\exp x/(1+\exp x)$ denotes the first derivative of the softplus function, visualized in in Fig.\ \ref{fig:softplus}(b). As expected, the true shear stress of \eqref{eq: strain-energy function} is monotonic, which we visualize in Fig.~\ref{fig:uniax_example}(a). Notably, the shear stress has an asymptote at $\tau=28$, which can be seen by evaluating \eqref{eq:tau_model} for large $\gamma$, where $\SM(x)\rightarrow 1$ as $x\rightarrow\infty$, see also Fig.\ \ref{fig:softplus}(b).
\begin{theorem}\label{th:non_tsts}
The potential $\psi$ in~\eqref{eq: strain-energy function} violates TSTS-M in uniaxial tension for the material parameters defined in \eqref{eq:mat_param}.
\end{theorem}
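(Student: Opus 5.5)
The plan is to restrict to uniaxial tension and reduce TSTS-M, via Hill's inequality \eqref{eq:hill}, to a scalar monotonicity condition on the derivative of a one-variable function. I will then exhibit two strains at which that monotonicity fails.

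\textbf{Reduction.} For incompressible uniaxial tension along $\be_1$ the principal stretches are $\lambda_1=\lambda$ and $\lambda_2=\lambda_3=\lambda^{-1/2}$. Writing $\varepsilon=\log\lambda$, we have $\log\bV=\varepsilon\,\diag(1,-\tfrac12,-\tfrac12)$, which is trace-free as required. Define the scalar function
\begin{equation*}
f(\varepsilon):=\psi\bigl(e^{\varepsilon},e^{-\varepsilon/2},e^{-\varepsilon/2}\bigr)\,.
\end{equation*}
By $\btau=\mathrm{D}_{\log\bV}\widehat W$ and coaxiality we get
\begin{equation*}
\bigl\langle\btau,\diag(1,-\tfrac12,-\tfrac12)\bigr\rangle=\tau_{11}-\tfrac12(\tau_{22}+\tau_{33})=f'(\varepsilon)\,.
\end{equation*}
Taking $\log\bV_i=\varepsilon_i\,\diag(1,-\tfrac12,-\tfrac12)$, the left-hand side of \eqref{eq:hill} becomes $(f'(\varepsilon_2)-f'(\varepsilon_1))(\varepsilon_2-\varepsilon_1)$. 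So it suffices to find $\varepsilon_1<\varepsilon_2$ with $f'(\varepsilon_2)<f'(\varepsilon_1)$, equivalently a point where $f''<0$. This is also the reduced-potential criterion mentioned after \eqref{eq:pot_reduced}, restricted to the line $\log\lambda_2=-\tfrac12\log\lambda_1$.

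\textbf{Explicit form of $f$.} Next I compute the four arguments of $\SP$ in \eqref{eq: strain-energy function} along this path. For $\nu=(\lambda,\lambda^{-1/2},\lambda^{-1/2})$ one gets
\begin{equation*}
\theta_1=a(\lambda+2\lambda^{-1/2})+b(2\lambda^{1/2}+\lambda^{-1})+c\,.
\end{equation*}
For the flip $(-\nu_1,-\nu_2,\nu_3)$ the linear part becomes $-\lambda$ and the quadratic part becomes $\lambda^{1/2}-\lambda^{-1}-\lambda^{1/2}=-\lambda^{-1}$, so
\begin{equation*}
\theta_2=-a\lambda-b\lambda^{-1}+c\,.
\end{equation*}
Analogously, $\theta_3=a\lambda-b\lambda^{-1}+c$ for the flip $(\nu_1,-\nu_2,-\nu_3)$. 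The flip $(-\nu_1,\nu_2,-\nu_3)$ gives $\theta_4=\theta_2$ up to the $\nu_2\leftrightarrow\nu_3$ symmetry.

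Hence $f(\varepsilon)=\sum_k\SP(\theta_k(e^\varepsilon))$, and
\begin{equation*}
f'(\varepsilon)=\lambda\sum_k\SM(\theta_k)\,\theta_k'(\lambda)\,,
\end{equation*}
which is an elementary closed-form expression in $\lambda$ with the parameters \eqref{eq:mat_param} inserted. This plays the same role as \eqref{eq:tau_model} does for simple shear.

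\textbf{Exhibiting the violation.} Guided by the plot in Fig.~\ref{fig:uniax_example}, I locate a range of $\lambda$ where one sigmoid term switches on or off sharply. For example, $\theta_1$ with $a=-5$, $b=-14$, $c=-22$ is strongly negative for large $\lambda$ and changes regime near moderate stretches. This switching produces a local drop in $f'$, i.e.\ a non-monotone uniaxial Cauchy stress. I then pick two concrete stretches, say one just before and one just after the drop, for instance rational values such as $\lambda_1=1$ and some $\lambda_2\in(1,3)$ read off the figure, and evaluate $f'$ at both.

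\textbf{Main obstacle.} The hard step is making this evaluation rigorous rather than purely numerical. I would certify it with elementary enclosures of $\SM$, namely $\SM(x)\le e^{x}$ for $x<0$ and $\SM(x)\ge 1-e^{-x}$ for $x>0$, at the chosen points. The points should be selected so that every sigmoid argument is either far from zero (and thus tightly bounded) or can be evaluated exactly. With such bounds, the strict inequality $f'(\varepsilon_2)<f'(\varepsilon_1)$ with $\varepsilon_1<\varepsilon_2$ follows by explicit arithmetic. This contradicts \eqref{eq:hill}, and by the stated equivalence it also contradicts TSTS-M.
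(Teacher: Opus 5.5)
Your reduction is correct and follows the paper's route. Along $\log\bV=\varepsilon\diag(1,-\tfrac12,-\tfrac12)$ the pressure drops out, $f'(\varepsilon)$ coincides with the uniaxial stress $\sigma(\lambda)$ in \eqref{eq:cauchy_model}, and any decrease of $f'$ contradicts \eqref{eq:hill}. The gap is in the explicit form of $f$, which is where the whole counterexample lives. For the flip $(\nu_1,-\nu_2,-\nu_3)$ nothing cancels, because both flipped entries are the equal stretches $\lambda^{-1/2}$. The linear part is $\lambda-2\lambda^{-1/2}$ and the quadratic part is $\lambda^{-1}-2\lambda^{1/2}$, so
\[
\theta_3=a\bigl(\lambda-2\lambda^{-1/2}\bigr)+b\bigl(\lambda^{-1}-2\lambda^{1/2}\bigr)+c=-5\lambda+10\lambda^{-1/2}+28\lambda^{1/2}-14\lambda^{-1}-22\,,
\]
and not $a\lambda-b\lambda^{-1}+c$. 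This is exactly the term that produces the violation. We have $\theta_3(1)=-3$ and $\theta_3(1.5)\approx3.6$. Once $\SM(\theta_3)\approx1$, the stress follows the prefactor $\lambda\theta_3'=-5\lambda-5\lambda^{-1/2}+14\lambda^{1/2}+14\lambda^{-1}$, which decreases: about $14.9$ at $\lambda=1.5$ and $12.1$ at $\lambda=2.5$. So the drop is the decay of this prefactor after saturation, not a sigmoid switching on or off. Your candidate $\theta_1=-(5\lambda+10\lambda^{-1/2}+28\lambda^{1/2}+14\lambda^{-1}+22)$ is at most $-79$ for every $\lambda>0$, with the maximum at $\lambda=1$. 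Its sigmoid is therefore negligible everywhere and never changes regime.

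As a consequence, the evaluation you plan would fail with your formulas. For $\lambda\ge1$ the only non-negligible part of your $f'$ is $(10\lambda-28/\lambda)\SM(v)$ with $v=5\lambda+14/\lambda-22$. Since $10\lambda-28/\lambda=2\lambda v'$, its derivative is $(10+28/\lambda^2)\SM(v)+2\lambda(v')^2\SM'(v)>0$, so you would obtain a monotone response. You would also get $f'(0)\approx-0.85\neq0$, although permutation invariance forces a stress-free reference state.

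The suggested pair does not work either. We have $f'=0$ at $\lambda=1$ and $\sigma>0$ on $(1,3)$, so no $\lambda_2\in(1,3)$ combined with $\lambda_1=1$ violates \eqref{eq:hill}. The first point must lie at or beyond the peak, e.g.\ the paper's $\sigma(1.5)\approx14.5>\sigma(2.5)\approx12.3$.

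With the corrected $\theta_3$ and these two points, your enclosures $\SM(x)\le e^{x}$ and $\SM(x)\ge1-e^{-x}$ do make the argument rigorous. They yield $\sigma(1.5)\ge14.47$ and $\sigma(2.5)\le12.36$, which would slightly strengthen the paper's purely numerical check. As written, however, the proposal neither carries out this evaluation nor has the correct formula to carry it out with.
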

\begin{proof}
For uniaxial tension along $\boldsymbol{e}_1$ of an underlying Cartesian coordinate system, the deformation gradient $\bF_\mathrm{ux}$ and the Cauchy stress $\bsigma_\mathrm{ux}$ are given by
\begin{equation}
[\bF_\mathrm{ux}]=\diag\bigl(\lambda,\,\lambda^{-1/2},\,\lambda^{-1/2}\bigr)\quad\text{and}\quad[\bsigma_\mathrm{ux}]=\diag(\sigma,\,0,\,0)\,,
\end{equation}
where $\lambda>1$ and $\sigma$ denote the stretch and Cauchy stress in tensile direction, respectively. Then, if the TSTS-M condition is fulfilled, $\sigma$ is strictly monotonically increasing in $\lambda$ \parencite{NEFF2025105033}. The Cauchy stress of the potential in \eqref{eq: strain-energy function} can be calculated in closed form, such that
\begin{equation}\label{eq:cauchy_model}
\begin{split}
\sigma(\lambda) &= \biggl(\lambda\,\mathrm{D}_{\nu_1}\widetilde{\psi}-\frac{1}{\sqrt{\lambda}}\,\mathrm{D}_{\nu_3}\widetilde{\psi}\biggr)\Bigg|_{\begin{subarray}{l}\nu_1=\lambda\\\nu_2=\lambda^{-1/2}\\\nu_3=\lambda^{-1/2}\end{subarray}} \\
&=\Bigl(10\lambda-\frac{28}{\lambda}\Bigr)\,\SM\Bigl(5\lambda +\frac{14}{\lambda}-22\Bigr)
\\
&\hphantom{=}\:-\frac{1}{\lambda}\Bigl(\sqrt{\lambda}+1\Bigr)\Bigl(5\sqrt{\lambda}-14\Bigr)\Bigl(\lambda-\sqrt{\lambda}+1\Bigr)\,\SM\biggl(-5\lambda-\frac{14}{\lambda}-22+\frac{10}{\sqrt{\lambda}}+28\sqrt{\lambda}\biggr)
\\
&\hphantom{=}\:-\frac{1}{\lambda}\Bigl(\sqrt{\lambda}-1\Bigr)\Bigl(5\sqrt{\lambda}+14\Bigr)\Bigl(\lambda+\sqrt{\lambda}+1\Bigr)\,\SM\biggl(-5\lambda-\frac{14}{\lambda}-22-\frac{10}{\sqrt{\lambda}}-28\sqrt{\lambda}\biggr)\,.
\end{split}
\end{equation}
A numerical analysis reveals that the Cauchy stress of \eqref{eq: strain-energy function} is not monotonically increasing for some stretches, e.g.,
\begin{equation}
    \sigma(1.5)\approx 14.5 >  \sigma(2.5)\approx 12.3\,.
\end{equation}
Hence, TSTS-M is not fulfilled. 

Alternatively, a loss of TSTS-M can be shown by inspecting the potential \eqref{eq: strain-energy function} through the reduced representation $\widehat{W}_\mathrm{red}^\mathrm{inc}$ in \eqref{eq:pot_reduced} for $\log\lambda_1 = \log\lambda$ and $\log\lambda_2 = -\log\lambda/2$ which reveals a lack of convexity. We provide a visualization of both approaches in Fig.~\ref{fig:uniax_example}(b,c). As expected, the conditions are violated for the same interval of stretch values $\lambda$. 
\end{proof}
\begin{corollary}\label{corr:pc}
Polyconvexity does not imply TSTS-M in the incompressible three-dimensional case. This follows from our counterexample of a potential that is polyconvex but does not satisfsy TSTS-M, cf.\ Theorems~\ref{th:pc} and~\ref{th:non_tsts}.
\end{corollary}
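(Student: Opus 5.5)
The corollary follows by exhibiting one explicit potential that is polyconvex and violates TSTS-M. The natural candidate is the potential $\psi$ in \eqref{eq: strain-energy function} with the parameters \eqref{eq:mat_param}. The plan has two steps.

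First, I invoke Theorem~\ref{th:pc}. It states that $\psi$ is polyconvex for \emph{all} $a,b,c\in\bbR$, so in particular for $(a,b,c)=(-5,-14,-22)$. That proof already verified the three sufficient and necessary conditions of Theorem~\ref{th:SSV}:
\begin{itemize}
\item convexity in $\boldsymbol{\varkappa}_\nu$, since $\SP$ is convex and $\theta$ is affine in $\boldsymbol{\varkappa}_\nu$;
\item $\Pi_3$-invariance, which holds by the symmetrised construction of $\psi$;
\item lower semicontinuity, which follows from continuity.
\end{itemize}
No further work is needed here beyond noting that the parameter choice is admissible.

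Second, I invoke Theorem~\ref{th:non_tsts}. For these parameters, the uniaxial Cauchy stress $\sigma(\lambda)$ in \eqref{eq:cauchy_model} satisfies $\sigma(1.5)>\sigma(2.5)$. Since TSTS-M forces $\sigma$ to be strictly increasing in uniaxial tension, TSTS-M fails.

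To conclude, I spell out the logical step explicitly. Set $\bV_1=\diag(1.5,1.5^{-1/2},1.5^{-1/2})$ and $\bV_2=\diag(2.5,2.5^{-1/2},2.5^{-1/2})$. The differences of both stresses and Hencky strains are diagonal. The inner product in the TSTS-M inequality then reduces to
\begin{equation*}
\bigl(\sigma(2.5)-\sigma(1.5)\bigr)\bigl(\log 2.5-\log 1.5\bigr)<0\,.
\end{equation*}
This holds because the lateral stress components vanish in uniaxial tension. Hence the inequality fails directly for this pair, and it does not even depend on the cited monotonicity result. A single polyconvex potential violating TSTS-M suffices to show the implication fails.

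The only real obstacle is the reliability of the numerical values $\sigma(1.5)\approx14.5$ and $\sigma(2.5)\approx12.3$. I would make this rigorous by evaluating the closed form \eqref{eq:cauchy_model} with interval bounds on the three sigmoid terms. The gap of about $2$ is large relative to any rounding error, so crude two-sided bounds on $\SM$ at the six arguments suffice. Finally, since polyconvexity implies rank-one convexity, the same example also shows that rank-one convexity does not imply TSTS-M.
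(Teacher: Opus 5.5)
Your proposal is correct and follows the paper's argument: the corollary is obtained by combining Theorem~\ref{th:pc} (polyconvexity for all $a,b,c$, hence for $(a,b,c)=(-5,-14,-22)$) with Theorem~\ref{th:non_tsts}. Your direct evaluation of the TSTS-M inner product for the two uniaxial states is a small, valid refinement: the arbitrary pressure drops out because $\tr(\log\bV_2-\log\bV_1)=0$, and this makes the argument independent of the monotonicity result cited in the proof of Theorem~\ref{th:non_tsts}.
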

\begin{corollary}\label{corr:rank}
Rank-one convexity does not imply TSTS-M in the incompressible three-dimensional case. This follows from Corollary~\ref{corr:pc} and the observation that polyconvexity implies rank-one convexity.
\end{corollary}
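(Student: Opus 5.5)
The plan is a direct logical argument combining two earlier facts. Assume, for contradiction, that every rank-one convex incompressible isotropic potential satisfies TSTS-M.

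First, by Theorem~\ref{th:pc}, the potential $\psi$ in~\eqref{eq: strain-energy function} with parameters~\eqref{eq:mat_param} is polyconvex.

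Second, we invoke the classical implication that polyconvexity implies rank-one convexity \parencite{Ciarlet1988}, in the form appropriate for the incompressible setting. For $\bF\in\SL(3)$ and $\ba\otimes\bb\in\operatorname{T}_{\operatorname{SL}(3)}(\bF)$, the segment $t\mapsto\bF+t\,\ba\otimes\bb$ stays in $\SL(3)$. Along this segment, $\cof(\bF+t\,\ba\otimes\bb)$ is affine in $t$, because the minors are affine along rank-one lines. Hence $t\mapsto\cP(\bF+t\,\ba\otimes\bb,\cof(\bF+t\,\ba\otimes\bb))$ is the composition of a convex function with an affine map, so it is convex. This gives exactly~\eqref{eq:rank_one}. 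Therefore $\psi$ is rank-one convex.

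Third, Theorem~\ref{th:non_tsts} shows that $\psi$ violates TSTS-M: the uniaxial Cauchy stress satisfies $\sigma(1.5)>\sigma(2.5)$. This contradicts the assumption, which establishes the claim. Equivalently, the statement follows from Corollary~\ref{corr:pc} combined with the polyconvexity-to-rank-one implication.

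The only point needing care is the second step. One must check that rank-one convexity is understood here with the restriction to tangent rank-one directions, and that the polyconvex representation is used only on $\SL(3)$. Both hold, because the affine-minor argument does not depend on these restrictions. The numerical claim in Theorem~\ref{th:non_tsts} is taken as given.
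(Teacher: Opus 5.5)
Your proposal is correct and follows the paper's argument: $\psi$ is polyconvex by Theorem~\ref{th:pc}, hence rank-one convex, while Theorem~\ref{th:non_tsts} shows it violates TSTS-M. The only difference is that you verify the implication polyconvexity $\Rightarrow$ rank-one convexity along tangent rank-one segments in $\SL(3)$, whereas the paper simply cites \textcite{Ciarlet1988}; your verification (the determinant stays equal to one and the cofactor is affine along such segments) is sound.
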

%%%%%%%%%%%%%
%	Conclusion
%%%%%%%%%%%%%
\section{Conclusion}

We recently showed that Ball's sufficient conditions for polyconvexity imply TSTS-M in the incompressible three-dimensional case \parencite[Prop.~3.3]{Wollner2026b}, which raised the question whether this implication also holds for general polyconvex parameterizations \parencite[Sect.~3.4]{Wollner2026b}. As shown in the present work, the answer is no: In the three-dimensional incompressible case, polyconvexity does not imply TSTS-M and polyconvexity alone is not sufficient to guarantee a physically reasonable response for idealized elastic materials.

% \nolinenumbers

%========================================================================
%
\newpage
% \paragraph{CRediT authorship contribution statement}
\noindent\textbf{CRediT authorship contribution statement}\quad
\textbf{Dominik K.\ Klein:} Conceptualization, Formal analysis, Visualization, Writing -- original draft, Writing -- review and editing.
\textbf{Maximilian P.\ Wollner:} Conceptualization, Formal analysis, Writing -- review and editing. 
\textbf{Patrizio Neff:} Conceptualization, Writing -- original draft, Writing -- review and editing.
\\[0.75\baselineskip]
% \paragraph{Conflict of interest}
\noindent\textbf{Conflict of interest}\quad
The authors declare that they have no conflict of interest.
\\[0.75\baselineskip]
% \paragraph{Acknowledgments}
\noindent\textbf{{Acknowledgments}}\quad
We want to acknowledge that, independently of our work and roughly at the same time, Gian-Luca~Geuken found a similar counterexample. Moreover, we want to thank him and David~Wiedemann for helpful comments on the manuscript. Dominik\ K.\ Klein acknowledges the financial support provided by the Deutsche Forschungsgemeinschaft (DFG, German Research Foundation, project number 492770117) and the Graduate School of Computational Engineering at TU Darmstadt.
\\[0.75\baselineskip]
% \paragraph{Data availability}
\noindent\textbf{Data availability}\quad
The authors have no data to share.
%
%%%%%%%%%%%%%%%%%%%%%%%%%%%%%%%%
% \appendix
% \numberwithin{equation}{section} 
%%%%%%%%%%%%%%%%%%%%%%%%%%%%%%%%
\vspace{-0.5\baselineskip}
\renewcommand*{\bibfont}{\footnotesize}
\printbibliography

@article{horgan2010,
    author = {C. O. Horgan and J. G. Murphy},
    title = {Simple shearing of incompressible and slightly compressible isotropic nonlinearly elastic materials},
    journal = {J. Elast.},
    year = {2010},
number={98},
pages={205--221},
doi={10.1007/s10659-009-9225-1}
}

@article{Schroeder_Neff_Balzani_2005, 
title={A variational approach for materially stable anisotropic hyperelasticity}, 
volume={42},
ISSN={00207683},
DOI={10.1016/j.ijsolstr.2004.11.021}, 
number={15}, 
journal={Int. J. Solids Struct.}, 
author={Schröder, J. and Neff, P. and Balzani, D.}, 
year={2005}, 
pages={4352--4371}
}

@article{Zee1983,
  title={Ordinary and strong ellipticity in the equilibrium theory of incompressible hyperelastic solids},
  author={L. Zee and E. R. Sternberg},
  journal={Arch. Ration. Mech. Anal.},
  year={1983},
  volume={83},
  pages={53-90},
  doi={10.1007/BF00281087}
}

@article{Baaser2026,
    author  = {Baaser, H.},
    title   = {Hyperelastic stability landscape: a check for {HILL} stability of isotropic, incompressible hyperelasticity depending on material parameters},
    year    = {2026},
    journal = {J. Elast.},
    volume  = {158},
    number  = {8},
    pages   = {1--34},
    doi     = {10.1007/s10659-025-10183-z}
}

@article{Hill1970,
    title   = {Constitutive inequalities for isotropic elastic solids under finite strain},
    author  = {Hill, R.},
    journal = {J. Mech. Phys. Solids},
    volume  = {314},
    pages   = {457--472},
    year    = {1970},
    doi     = {10.1098/rspa.1970.0018}
}

@article{Ball1976,
    author  = {Ball, J. M.},
    title   = {Convexity conditions and existence theorems in nonlinear elasticity},
    year    = {1976},
    journal = {Arch. Rational Mech. Anal.},
    volume  = {63},
    pages   = {337--403},
    doi     = {10.1007/BF00279992}
}

@incollection{Ball1977,
  author     = {Ball, J. M.},
  editor     = {Knops, R. J.},
  title      = {Constitutive inequalities and existence theorems in nonlinear elastostatics},
  booktitle  = {Nonlinear analysis and mechanics: {Heriot-Watt Symposium}},
  publisher  = {Pitman Publishing},
  volume     = {1},
  pages      = {187--241},
  year       = {1977},
}

@book{Ciarlet1988,
    author    = {Ciarlet, P. G.},
    title     = {Mathematical Elasticity Volume I: Three-Dimensional Elasticity},
    year      = {1988},
    series    = {Studies in Mathematics and its Applications},
    publisher = {North-Holland Publishing Company},
}

@article{Dacorogna1993,
author = {Dacorogna, Bernard and Koshigoe, Hideyuki},
journal = {Ann. Fac. Sci. Toulouse Math.},
number = {2},
pages = {163--184},
title = {On the different notions of convexity for rotationally invariant functions},
volume = {2},
year = {1993}
}

@article{Klein2026a,
    author  = {Klein, D. K. and Mokarram, H. and Kikinov, K. and Kannapinn, M. and Rudykh, S. and Gil, A. J.},
    title   = {Neural networks meet hyperelasticity: {A} monotonic approach},
    year    = {2026},
    journal = {Eur. J. Mech., A/Solids},
    volume  = {116},
    pages   = {105900},
    doi     = {10.1016/j.euromechsol.2025.105900}
}

@article{Neff2025b,
    author  = {Neff, P. and Husemann, N. J. and Korobeynikov, S. N. and Ghiba, I.-D. and Martin, R. J.},
    title   = {A natural requirement for objective corotational rates---on structure-preserving corotational rates},
    journal = {Acta. Mech.},
    pages   = {2657--2689},
    volume  = {236},
    year    = {2025},
    doi     = {10.1007/s00707-025-04249-1}
}

@article{NEFF2025105033,
title = {The corotational stability postulate: Positive incremental Cauchy stress moduli for diagonal, homogeneous deformations in isotropic nonlinear elasticity},
journal = {Int. J. Non-Linear Mech.},
volume = {174},
pages = {105033},
year = {2025},
issn = {0020-7462},
doi = {10.1016/j.ijnonlinmec.2025.105033},
author = {Patrizio Neff and Nina J. Husemann and Aurélien S. {Nguetcho Tchakoutio} and Sergey N. Korobeynikov and Robert J. Martin},
}

@article{Mielke2005,
    author  = {Mielke, A.},
    title   = {Necessary and sufficient conditions for polyconvexity of isotropic functions},
    year    = {2005},
    journal = {J. Convex Anal.},
    volume  = {12},
    number  = {2},
    pages   = {291--314},
}

@Inbook{Dunn2003,
author="Dunn, J. Ernest
and Fosdick, Roger
and Zhang, Ying",
editor="Podio-Guidugli, P.
and Brocato, M.",
title="Rank 1 convexity for a class of incompressible elastic materials",
bookTitle="Rational Continua, Classical and New: A collection of papers dedicated to Gianfranco Capriz on the occasion of his 75th birthday",
year="2003",
publisher="Springer Milan",
address="Milano",
pages="89--96",
isbn="978-88-470-2231-7",
doi="10.1007/978-88-470-2231-7_7",
url="https://doi.org/10.1007/978-88-470-2231-7_7"
}

@article{Ghiba2026b,
    author  = {Ghiba, I.-D. and Wollner, M. P. and Neff, P.},
    title   = {Polyconvexity implies {H}ill’s inequality in {SL(2)}},
    note    = {(in preparation)},
    year    = {n.a.},
    journal = {n.a}
}

@article{Neff2016a,
    author  = {Neff, P. and Eidel, B. and Martin, R. J.},
    title   = {Geometry of logarithmic strain measures in solid mechanics},
    journal = {Arch. Rational Mech. Anal.},
    pages   = {507--572},
    volume  = {222},
    year    = {2016},
    doi     = {10.1007/s00205-016-1007-x}
}

@article{Neff2025c,
    author  = {Neff, P. and Holthausen, S. and d'Agostino, M. V. and Bernardini, D. and Sky, A. and Ghiba, I.-D. and Martin, R. J.},
    title   = {Hypo-elasticity, {{C}auchy}-elasticity, corotational stability and monotonicity in the logarithmic strain},
    journal = {J. Mech. Phys. Solids},
    pages   = {106074},
    volume  = {202},
    year    = {2025},
    doi     = {10.1016/j.jmps.2025.106074}
}

@article{Neff2026a,
    author  = {Neff, P. and Husemann, N. J. and Holthausen, S. and Gmeineder, F. and Blesgen, T.},
    title   = {Rate-form equilibrium for an isotropic {C}auchy-elastic formulation.\\{P}art {I}: {M}odeling},
    journal = {J. Nonlinear Sci.},
    year    = {2026},
    volume  = {36},
    number  = {8},
    pages   = {1--46},
    doi     = {10.1007/s00332-025-10214-y},
}

@article{Truesdell1956,
  title   = {{Das ungelöste Problem der endlichen Elastizitätstheorie}},
  author  = {Truesdell, C.},
  year    = {1956},
  journal = {Z. angew. Math. Mech.},
  volume  = {36},
  number  = {3},
  pages   = {97--103}
}

@article{richter1949,
  title   = {Verzerrungstensor, Verzerrungsdeviator und Spannungstensor bei endlichen Formänderungen},
  author  = {H. Richter},
  year    = {1949},
  journal = {Z. angew. Math. Mech.},
  volume  = {29},
  number  = {3},
  pages   = {65--75}
}

@article{Wiedemann2026,
    author  = {Wiedemann, D. and Peter, M. A.},
    title   = {Characterization of polyconvex isotropic functions},
    year    = {2026},
    journal = {Calc. Var.},
    volume  = {65},
    pages   = {115},
    doi     = {10.1007/s00526-025-03222-z}
}

@article{Wollner2026a,
    author  = {Wollner, M. P. and Holzapfel, G. A. and Neff, P.},
    title   = {In search of constitutive conditions in isotropic hyperelasticity: polyconvexity versus true-stress-true-strain monotonicity},
    year    = {2026},
    journal = {J. Mech. Phys. Solids},
    volume  = {209},
    pages   = {106465},
    doi     = {10.1016/j.jmps.2025.106465}
}

@article{Wollner2026b,
    author  = {M. P. Wollner and D. K. Klein and H. Baaser and G. A. Holzapfel and P. Neff},
    title   = {Concurrent enforcement of polyconvexity and true-stress-true-strain monotonicity in incompressible isotropic hyperelasticity: {a}pplication to neural network constitutive models},
    year    = {2026},
journal={Pre-print under review},
      eprint={2605.20031},
      archivePrefix={arXiv},
}

@article{klein2026limitationspolyconvexity,
      title={On limitations of polyconvexity}, 
      author={Dominik K. Klein and Rogelio Ortigosa and Heinrich T. Roth and Karl A. Kalina and Jesús Martínez-Frutos and Markus Kästner and Oliver Weeger},
      year={2026},
      eprint={2605.31392},
      archivePrefix={arXiv},
journal={Pre-print under review}
}

@article{Geuken2026,
    author  = {Geuken, G.-L. and Kurzeja, P. and Wiedemann, D. and Zlatić, M. and Čanađija, M. and Mosler, J.},
    title   = {Modeling isotropic polyconvex hyperelasticity by neural networks -- sufficient and necessary criteria for compressible and incompressible materials},
    year    = {2026},
      archivePrefix={arXiv},
journal={Pre-print under review},
    eprint  = {2603.27351},
}

@misc{neff2019axiomaticintroductionarbitrarystrain,
      title={The axiomatic introduction of arbitrary strain tensors by Hans Richter -- a commented translation of "Strain tensor, strain deviator and stress tensor for finite deformations"}, 
      author={Patrizio Neff and Kai Graban and Eva Schweickert and Robert J. Martin},
      year={2019},
      archivePrefix={arXiv},
journal={Pre-print under review},
      eprint={1909.05998},
}

@article{Rosakis1997,
    author  = {Rosakis, P.},
    title   = {Characterization of convex isotropic functions},
    year    = {1997},
    journal = {	J. Elast.},
    volume  = {49},
    pages   = {257--267},
    doi     = {10.1023/A:1007468902439}
}

@misc{Neff2025,
    author   = {Neff, P. and Husemann, N. J. and Holthausen, S. and d'Agostino, M. V. and Bernardini, D. and Sky, A. and Tchakoutio Nguetcho, A. S. and Ghiba, I.-D. and Martin, R. J. and Gmeineder, F. and Korobeynikov, S. N. and Blesgen, T.},
    title    = {{Truesdell’s Hauptproblem: o}n constitutive stability in idealized isotropic nonlinear elasticity and a 500€ challenge},
    doi 	 = {10.13140/RG.2.2.15349.69603},
    year	 = {2025},
}
\end{document}